\newtheorem{theorem}{Theorem}[section]
\newtheorem{proposition}[theorem]{Proposition}
\newtheorem{lemma}[theorem]{Lemma}
\title{A PageRank Model for Player Performance Assessment in Basketball, Soccer and Hockey}
\author{Shael Brown\\Department of Mathematics and Statistics\\
Dalhousie University, 
Halifax, Nova Scotia, Canada B3H 3J5}
\date{}
\begin{document}

\maketitle

\begin{abstract}
In the sports of soccer, hockey and basketball the most commonly used statistics for player performance assessment are divided into two categories: offensive statistics  and defensive statistics. However, qualitative assessments of playmaking (for example making ``smart'' passes) are difficult to quantify. It would be advantageous to have available a single statistic that can emphasize the flow of a game, rewarding those players who initiate and contribute to successful plays more. In this paper we will examine a model based on Google's PageRank. Other papers have explored ranking teams, coaches, and captains but here we construct ratings and rankings for individual members on both teams that emphasizes initiating and partaking in successful plays and forcing defensive turnovers.

For a soccer/hockey/basketball game, our model assigns a node for each of the $n$ players who play in the game and a ``goal node''. Arcs between player nodes indicate a pass in the \textit{reverse} order (turnovers are dealt with separately). Every sport-specific situation (fouls, out-of-bounds, play-stoppages, turnovers, missed shots, defensive plays) is addressed, tailored for each sport. As well, some additional arcs are added in to ensure that the associated matrix is primitive (some power of the matrix has all positive entries) and hence there is a unique PageRank vector. The PageRank vector of the associated matrix is used to rate and rank the players of the game. 

To illustrate the model, data was taken from nine NBA games played between 2014 and 2016. The model applied to the data showed that this model did indeed provide the type of comprehensive statistic described in the introductory paragraph. Many of the top-ranked players (in the model) in a given game had some of the most impressive traditional stat-lines. However, from the model there were surprises where some players who had impressive stat-lines had lower ranks, and others who had less impressive stat-lines had higher ranks.
Overall, the model provides an alternate tool for player assessment in soccer, basketball and hockey. The model's ranking and ratings reflect more the flow of the game compared to traditional sports statistics.
\end{abstract}

\section{Background}
Google PageRank was created as the backbone to what is now the most influential search engine ever created \cite{Beyond}. Its purpose is to rank the importance of web pages when a user makes a query. The foundations of PageRank lie in Markov chain theory: given a finite set of states $S = \{s_{1},\ldots,s_{n}\}$, let $t_{i,j}$ be the probability of moving from state $s_{i}$ to state $s_{j}$ from time $k$ to $k+1$, which is independent of $k$. Let $T = \left(  t_{i,j} \right)$ denote the \textit{transition matrix} of the Markov chain. Provided that the matrix $T$ is \textit{primitive} (i.e.., $T^{m} > 0$ for some positive integer $m$), there is a unique \textit{stationary vector}, $\mathbf{v}$, such that $\mathbf{v}^{t} \mathbf{1} = 1$ and $T^{t}\mathbf{v}=\mathbf{v}$, where $\mathbf{1}$ is the vector in $\mathbb{R}^{n}$ consisting of all $1$'s (that is,  $\mathbf{v}$ is an eigenvector of $T^{t}$ with eigenvalue $1$ whose nonnegative entries sum to $1$) \cite{PR}. We call such a  vector the \textit{PageRank vector} of the Markov chain. (The lack of primitivity in Google's Markov model in general requires some alteration to the transition matrix in that case.) Each component of the PageRank vector is thought of as the \textit{rank} of that state (and an ordering of the states is derived from these values). 

There is a natural way to construct a Markov chain from a (finite) directed graph. The states of the chain are the nodes of the graph. If there are $n_{i,j}$ arcs from node $i$ to node $j$ and node $i$ has a total of $n_i$ outgoing arcs, then $T_{i,j}=\frac{n_{i,j}}{n_i}$. For a Markov chain derived from a directed graph, primitivity of the Markov chain corresponds to the existence of a positive integer $k$ such that there is a walk of length $k$ between any two nodes. In such a case there is a unique stationary vector for the associated Markov chain, which can be calculated from a linear system (see \cite{Beyond} for more details). We remark that it has been observed that the PageRank vector is fairly insensitive to small changes in the network involving only lowly-ranked nodes \cite{Sensitivity}.

Previous applications of PageRank to sports metrics usually address ranking either teams \cite{NFL, Rankingrankings, Cricket}, coaches \cite{Coaches}, or individual players on various teams \cite{Cricket, CricketBB}. A number of models have relied on underlying graph networks of games in their respective sports. In \cite{strategies} a directed graph representing the passes between the starting players on an individual soccer team was constructed and a PageRank vector was computed to highlight who were the most important players on the team based on ball reception; the network as well was analyzed to determine the strategies and weak-points of each team. In \cite{Team} every team (in some team sport involving passing) has a corresponding weighted directed graph representing passes and two additional nodes representing the other team's goal and missed shots. Batsmen and bowlers that face each other on separate teams are compared using a basic model much like the one that compares teams in the ``win-loss'' PageRank method for ranking teams \cite{CricketBB}. A weighted directed graph for players across many basketball teams is created in \cite{Basketball} where arcs exist only between players who played on the court together on the same team at some point and the weight of these arcs corresponds to how effective they were in playing together. Finally, in \cite{Soccer} a PageRank network is created for each individual soccer team where arcs indicate passes between players. 

None of these models allow for the effective comparison of any two players playing in the same game together (or even in different games), possibly of different positions or teams, using a PageRank method. 
More importantly, they don't emphasize playmaking ability, as opposed to pure offensive statistics, and that is exactly what we plan to do.

\section{The Model}

We create a directed graph (which we abbreviate as a {\textit{digraph}) representing the progression of play during a particular game -- whether soccer, basketball or hockey. For each of the $n$ players in the game there is a node, and the digraph contains one additional  \textit{goal node} (while our implementation of a goal node is not unique, the use of a ``missed shot'' node in \cite{Team} is redundant when we have both teams that are competing against each other in a game represented on the same graph). 
Since we desire a model that values playmaking, we must reverse the direction of most arcs that would result from a straightforward progression of play (such an idea was raised but not deeply pursued in a multi-team setting in \cite{Soccer}). For player $i$ and player $j$ (represented by node $i$ and node $j$, respectively) on the same team, whenever player $i$ passes to player $j$ we draw an arc from node $j$ to node $i$. However, if players $i$ and $j$ are on separate teams and player $i$ loses the ball/puck to player $j$ we draw an arc from node $i$ to node $j$. If player $i$ scores, the sport specific value of the score will be the number of arcs drawn from the goal node to node $i$ (for example an NBA 3-pointer would result in three arcs). All of our choices for arc direction ensures the flow of rank rewards playmaking. There will be more game-specific arcs, to be discussed below.

We initialize the digraph for a given game as follows. We draw arcs in both directions between each player node and the goal node, and in addition, we draw a loop from the goal node to itself. Outside of goal scoring, no further arc will be drawn to or from the goal node. If team $1$ has $n_{1}$ players and team $2$ has $n_{2}$ players ($n_{1}+n_{2}=n$) then in the corresponding \textit{game transition matrix} $T \in M_{n+1,n+1}(\mathbb{R})$ the first $n_{1}$ columns (and rows) of $T$ represent players on team 1, columns (and rows) $n_{1}+1$ to $n$ representing players on team 2, and the $(n+1)$th column and row  representing the goal node. Thus, before the game starts, $T_{n+1,i}=\frac{1}{n+1}$ for $1 \leq i \leq n+1$ and $T_{i,j}=0$ otherwise. The method of construction of the initial digraph ensures that any two nodes are connected by a path of length exactly two, and hence the corresponding Markov chain transition matrix will have each entry of $T^2$ nonnegative, making $T$ primitive, and therefore the Markov chain associated with the digraph has a unique PageRank vector.

We let $\mathbf{r} = (r_{1},\ldots,r_{n},r_{\mbox{g}})$ be the PageRank vector of the game transition matrix. For a number of reasons, to be listed, we shall rescale the values in the PageRank vector (such a process does not change the induced ordering of the players' ranks). One immediate issue with the model is the fact that the goal node may have different rank in each game depending on the number of players in the game, thus making the comparison of ranks of players in different games dependent on the rank of the goal node. We can scale the computed rank vector $r=(r_{1},r_{2},...,r_{g})$ by any scalar (the eigenspace corresponding to eigenvalue $1$ has dimension $1$, as the digraph, being primitive, is strongly connected). We standardize $r$ by defining the \textit{integrated playmaking metric} (IPM) of player $i$ in the game by
\[ IPM_{i} = 50 n \cdot \frac{r_{i}}{ \sum_{j=1}^{n}r_{j}} = \frac{50n \cdot r_{i}}{ 1 - r_{\mbox{g}}}.\] 
The choice of scaling is as follows: the denominator removes the effect of the goal node's rank, and the numerator ensures (a) that the IPM is insensitive to the number of players in the game and (b) provides values on a reasonable scale, between 0 and $1000$ (it is not hard to see that the average IPMs of all players in a game is $50$).
The IPMs can thus be used for meaningful comparison of players in different games.

We now return to how the digraph itself is built up in the three sport specific situations. We illustrate the process with basketball (the rules for soccer and hockey can be found in Appendices A and B respectively). Each bullet point is a play ``event'', with its subsequent descriptor the corresponding arc(s) to add to the digraph.

Basketball rules of implementation:
\begin{itemize}
\item Pass from player $i$ to player $j$.
	\begin{itemize}
		\item An arc from node $j$ to node $i$
	\end{itemize}
\item Player $i$ dispossesses player $j$. [This could include cases where player $i$ does not gain possession of the ball after dispossessing player $j$ -- for example a defensive touch leading to the ball being out of play or deflecting a pass still into play but away from its intended target.]
	\begin{itemize}
		\item An arc from node $j$ to node $i$
	\end{itemize}
\item Player $i$ scores $n$ points where $1 \leq n \leq 4$.
	\begin{itemize}
		\item $n$ arcs from the goal node to node $i$
	\end{itemize}
\item Player $i$ shoots when being contested and defended by player $j$ and misses the net. [Same as player $j$ dispossessing player $i$, play resuming with the rebounding/inbounding player. This case includes the situation where player $j$ blocks player $i$.]
	\begin{itemize}
		\item An arc from node $i$ to node $j$
	\end{itemize}
\item Player $i$ shoots and misses the net under no pressure and the ball is rebounded by player $j$. [Same as player $j$ dispossessing player $i$.]
	\begin{itemize}
		\item An arc from node $i$ to player $j$
	\end{itemize}
\item Player $i$ fouls player $j$ and player $j$ makes at least one free throw.
	\begin{itemize}
		\item If player $j$ makes $n>0$ free throws then $n$ arcs are created from the goal node to node $j$
	\end{itemize}
\item Player $i$ fouls player $j$ and player $j$ makes zero free throws. [Same as player $j$ dispossessing player $i$ -- it was a ``smart'' foul.]
	\begin{itemize}
		\item An arc from node $j$ to node $i$
	\end{itemize}
\item Any stoppage of play that does not have to do with the game (i.e. a technical foul, fan interference, injury, altercation etc.). [Play is dead.]
	\begin{itemize}
		\item No arc drawn
	\end{itemize}
\item Player $i$ intercepts a pass from player $j$. [Same as player $i$ dispossessing player $j$.]
	\begin{itemize}
		\item An arc from node $j$ to node $i$
	\end{itemize}
\item Player $i$ touches the ball without having possession (for example the ball hits player $i$, or a ``pinball'' play).
	\begin{itemize}
		\item No arc drawn
	\end{itemize}
\item Any unforced turnover by player $i$.
	\begin{itemize}
		\item No arcs drawn
	\end{itemize}
\end{itemize}

We illustrate the process with a small example. Suppose that two basketball teams, the Reds and the Blues, are playing against each other in a ``3-on-3'' match (where all baskets are worth one point). We will denote the players on the Reds by $A$, $B$ and $C$ and those on the Blues by $D$, $E$ and $F$. Before the game begins we have the setup of the digraph shown in Figure~\ref{example1} (where $G$ stands for the goal node). In this and subsequent diagrams of digraphs, if there exists more than one arc between two nodes we will draw one arc but label it with the number of arcs that exist between the two nodes. 
\pagebreak

\begin{figure}[ht]
\begin{center}
\includegraphics[scale=0.75]{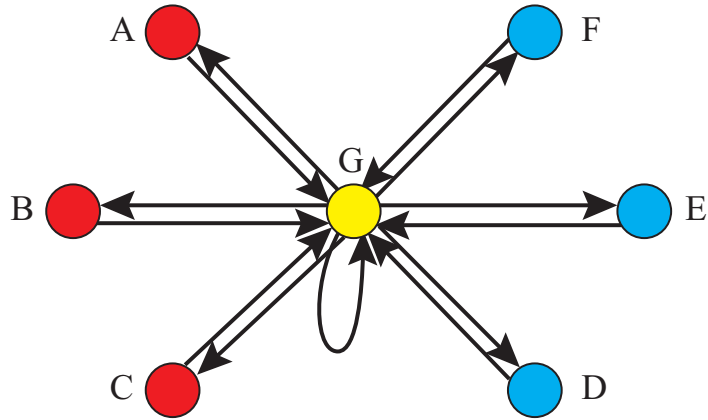}
\caption{Small example's initial digraph (before play).}
\label{example1}
\end{center}
\end{figure}

Now suppose we have the following sequence of plays in the game: (where ``$\rightarrow$'' represents the movement of the ball between nodes and ``0'' is the end of play sequence symbol):

\noindent $A \rightarrow B \rightarrow A \rightarrow F \rightarrow G$\\
$D \rightarrow F \rightarrow E \rightarrow F \rightarrow D \rightarrow C \rightarrow B  \rightarrow C \rightarrow A \rightarrow C  \rightarrow B \rightarrow A \rightarrow G$\\
$D \rightarrow C \rightarrow A \rightarrow C \rightarrow B \rightarrow A \rightarrow G$\\
$D \rightarrow F \rightarrow 0 \rightarrow B \rightarrow C \rightarrow A \rightarrow G$\\
$D \rightarrow F \rightarrow E \rightarrow F \rightarrow D \rightarrow G$\\
$A \rightarrow B \rightarrow F \rightarrow G$

\vspace{0.5in}
\noindent After these sequences of plays our updated network becomes:\\
\pagebreak
\begin{figure}[ht]
\begin{center}
\includegraphics[scale=0.75]{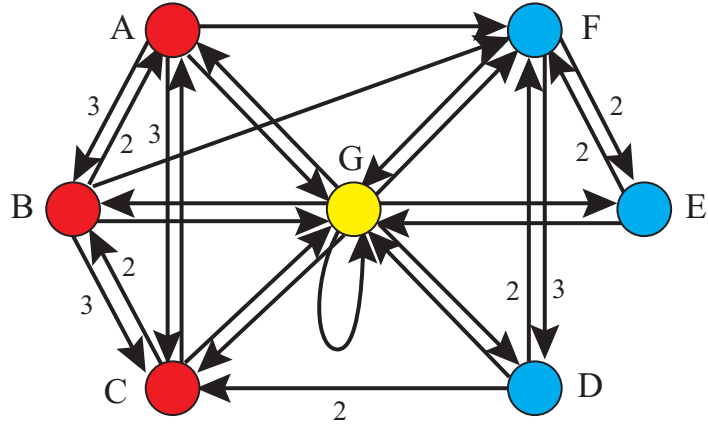}
\caption{Small example updated}
\label{model}
\end{center}
\end{figure}

\noindent The adjacency matrix of the digraph (whose $(i,j)$--th entry is the number of arcs in the digraph from node $i$ to node $j$) is
\[\begin{bmatrix}
0 & 3 & 3 & 0 & 0 & 1 & 1\\
2 & 0 & 3 & 0 & 0 & 1 & 1\\
2 & 2 & 0 & 0 & 0 & 0 & 1\\
0 & 0 & 2 & 0 & 0 & 2 & 1\\
0 & 0 & 0 & 0 & 0 & 2 & 1\\
0 & 0 & 0 & 3 & 2 & 0 & 1\\
4 & 1 & 1 & 2 & 1 & 3 & 1
\end{bmatrix},\]
and so the transition matrix for this game is 
\[T=\begin{bmatrix}
0 & 2/7 & 2/5 & 0 & 0 & 0 & 4/13\\
3/8 & 0 & 2/5 & 0 & 0 & 0 & 1/13\\
3/8 & 3/7 & 0 & 2/5 & 0 & 0 & 1/13\\
0 & 0 & 0 & 0 & 0 & 1/2 & 2/13\\
0 & 0 & 0 & 0 & 0 & 1/3 & 1/13\\
1/8 & 1/7 & 0 & 2/5 & 2/3 & 0 & 3/13\\
1/8 & 1/7 & 1/5 & 1/5 & 1/3 & 1/6 & 1/13
\end{bmatrix}.\]
The calculated IPMs of the players in the system are as follows:

\begin{table}[ht]
\begin{center}
\begin{tabular}{| c | c | c |} \hline
Player & Team & IPM \\ \hline\hline
C & Reds & 64.66\\ \hline
F & Blues & 60.38\\ \hline
A & Reds & 58.79\\ \hline
B & Reds & 52.39\\ \hline
D & Blues & 39.17\\ \hline
E & Blues & 24.61\\ \hline
\end{tabular}
\end{center}
\label{default}
\end{table}

While player $A$ scored the most goals in the game and player $F$ is tied with player $A$ for the most points (sum of goals and assists), we see that in fact player $C$, who had only 1 assist and no goals, had the highest IPM. However, a cursory examination of the plays clearly shows how integral player $C$ was in the game as a playmaker. The example shows how a player whose contribution to the game might be ignored under the usual stats lines receives their well deserved acknowledgement under the proposed model.


Before we continue with some experimental results it is natural to ask how this model fits with our intuition of evaluating the performance of athletes. 

We observe first that the lowest possible IPM of any player in a game of $n$ players is no more than $50$, since if the smallest IPM of any player in the game is more than 50 we would contradict the fact that the average IPM of all players in the game is always 50. 
The following propositions consider the spacing of IPMs in a game.

\begin{proposition}
If there are $k$ starters (out of $n$ players) in a given game with a cumulative starter IPM of $R$ then there must exist a bench player with a IPM that is at most $\frac{n(R-50k)}{k(n-k)}$ distance from the IPM of some starter.
\end{proposition}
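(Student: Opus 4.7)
The plan is to recognize the quantity $\frac{n(R-50k)}{k(n-k)}$ as exactly the gap between the average starter IPM and the average bench IPM, and then exhibit an explicit starter-bench pair witnessing this gap. Since the total IPM over all $n$ players is $50n$ (as observed in the paragraph preceding the proposition), the bench IPMs sum to $50n - R$, so the average starter IPM is $\bar{s} = R/k$ and the average bench IPM is $\bar{b} = (50n - R)/(n-k)$. A short algebraic simplification gives
\begin{equation*}
\bar{s} - \bar{b} \;=\; \frac{R(n-k) - k(50n - R)}{k(n-k)} \;=\; \frac{n(R - 50k)}{k(n-k)},
\end{equation*}
which is precisely the bound claimed in the proposition.

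To locate the witnessing pair, I would let $s^{*}$ be a starter whose IPM is smallest and $b^{*}$ a bench player whose IPM is largest. Because the minimum of a finite set of numbers is at most its average, $IPM_{s^{*}} \le \bar{s}$, and dually $IPM_{b^{*}} \ge \bar{b}$. Subtracting these two inequalities gives
\begin{equation*}
IPM_{s^{*}} - IPM_{b^{*}} \;\le\; \bar{s} - \bar{b} \;=\; \frac{n(R - 50k)}{k(n-k)},
\end{equation*}
so the bench player $b^{*}$ falls within the claimed distance of the starter $s^{*}$. The entire argument thus reduces to the averaging identity above combined with the two one-line observations that the minimum of a list is at most its mean and the maximum is at least its mean.

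The hardest part of writing this up carefully is being precise about what ``distance'' means in the statement. The inequality above controls the signed gap $IPM_{s^{*}} - IPM_{b^{*}}$, and this immediately gives the absolute distance $|IPM_{s^{*}} - IPM_{b^{*}}|$ in the typical case $IPM_{b^{*}} \le IPM_{s^{*}}$. In the exceptional case where $IPM_{b^{*}}$ already exceeds $IPM_{s^{*}}$, the bench player $b^{*}$ is at least as strong as a starter, so the ``distance from some starter'' should be interpreted as zero and the bound holds trivially. I therefore expect the pair $(s^{*}, b^{*})$ to be the correct witness in every case, with the only real work being the averaging identity and the two extremal observations described above.
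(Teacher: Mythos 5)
Your proof is correct and is essentially the paper's own argument: the paper's ``worst case scenario,'' in which every starter sits at the starter average $R/k$ and every bench player at the bench average $(50n-R)/(n-k)$, is exactly your pair of observations (minimum $\le$ mean for the starters, maximum $\ge$ mean for the bench) followed by the same algebraic identity. Your closing worry about signed versus absolute distance is legitimate and is not addressed in the paper --- both arguments only bound the signed gap $IPM_{s^*}-IPM_{b^*}$ from above, and your ``interpret the distance as zero'' fix does not literally rescue the absolute-value reading (a bench player can outrank the weakest starter while still being far in absolute value from every starter) --- but the intended one-sided reading, that some bench player is not far \emph{below} some starter, is exactly what your pair $(s^*,b^*)$ delivers.
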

\begin{proof}
We first note that the average bench player IPM must be $\frac{50n-R}{n-k}$ as there are $n-k$ bench players and the total sum of all the IPMs is $50n$. Clearly the worst case scenario is if all starters have a IPM of $\frac{R}{k}$ and all bench players have a IPM of $\frac{50n-R}{n-k}$ (as otherwise there would have to be one bench player with IPM greater than $\frac{50n-R}{n-k}$ or one starter with IPM less than $\frac{R}{k}$). Thus, in this worst case scenario the distance between any starter and any bench player is $\frac{R}{k} - \frac{50n-R}{n-k} =  \frac{Rn-Rk-50nk+Rk}{k(n-k)} = \frac{n(R-50k)}{k(n-k)}$.
\end{proof}

\begin{proposition}
If there are $n$ players in a game then there must exist at least two players who have IPMs within $\frac{25n}{n-2}$ of each other.
\end{proposition}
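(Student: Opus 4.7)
The plan is to sort the players by IPM and derive a contradiction from assuming that every consecutive gap exceeds $\frac{25n}{n-2}$. The only two facts I need are that every IPM is nonnegative (it is a positive multiple of a stationary-vector entry of a primitive Markov chain) and, as noted in the excerpt, that the IPMs of the $n$ players in any game sum to $50n$.

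First I would relabel so that $y_1\le y_2\le\cdots\le y_n$ are the IPMs, set $d:=\frac{25n}{n-2}$, and suppose for contradiction that $y_{i+1}-y_i>d$ for every $i\in\{1,\ldots,n-1\}$. Telescoping gives $y_i > y_1+(i-1)d\ge(i-1)d$ (using $y_1\ge 0$), and summing,
\[ 50n \;=\; \sum_{i=1}^{n}y_i \;>\; d\cdot\frac{n(n-1)}{2}, \]
which rearranges to $d<\frac{100}{n-1}$. To close the argument it remains to verify the purely algebraic inequality $\frac{25n}{n-2}\ge\frac{100}{n-1}$ for all $n\ge 3$. Cross-multiplying, this is equivalent to $n^2-5n+8\ge 0$, and that quadratic has negative discriminant ($25-32<0$), hence is positive for every real $n$. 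This contradicts $d<\frac{100}{n-1}$, so some consecutive pair $y_i,y_{i+1}$ must satisfy $y_{i+1}-y_i\le\frac{25n}{n-2}$, giving the required two players.

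The main obstacle, if any, is resisting the temptation to deduce this from Proposition~2.2 by picking two players as a ``starter'' pair: for the resulting distance there to be at most $\frac{25n}{n-2}$, the pair's IPM sum $R$ must lie in $[50,150]$, and in extreme games (for instance, one player carrying essentially all of the IPM mass while the rest have IPM~$0$) no such pair exists, even though the conclusion of Proposition~2.3 still holds trivially in such cases. The sorted-sum pigeonhole route above bypasses this issue entirely and in fact delivers the sharper constant $\frac{100}{n-1}$, showing that the stated bound $\frac{25n}{n-2}$ carries some slack.
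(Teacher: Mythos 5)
Your proof is correct, but it follows a genuinely different route from the paper's. The paper argues by pigeonhole on the \emph{range} of possible IPM values: it partitions $[0,50n]$ into $n-2$ intervals of length $\frac{50n}{n-2}$, asserts that at least three players must fall in a common interval, and then places two of those three in the same half-interval to get the bound $\frac{25n}{n-2}$. You instead sort the IPMs, assume every consecutive gap exceeds $d=\frac{25n}{n-2}$, telescope to $y_i>(i-1)d$, and contradict the identity $\sum_i y_i=50n$; your algebraic check that $\frac{25n}{n-2}\ge\frac{100}{n-1}$ reduces to $n^2-5n+8\ge 0$, whose discriminant is negative, and the cross-multiplication is legitimate since $n\ge 3$ is forced by the statement. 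Your approach buys two things. First, it exploits the sum constraint quantitatively rather than only as an upper bound on a single IPM, and so delivers the strictly sharper conclusion that some pair of players differ by at most $\frac{100}{n-1}$. Second, it avoids the most delicate step of the paper's argument: with $n$ players spread over $n-2$ intervals, the pigeonhole principle guarantees three in a common interval only for $n=3$; for $n\ge 4$ it guarantees just two, so the paper's ``three players in one interval'' claim does not follow as written, whereas your telescoping argument needs no such claim. Your closing observation about why the result cannot be read off from Proposition 2.2 by treating two players as a ``starter'' group is also well taken.
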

\begin{proof}
Clearly 0 is a lower bound on the IPM of a player in a given game and upper bound is $50n$ (the sum of all the IPMs). Thus, we may partition this range into $n-2$ equal length intervals, each of length $\frac{50n}{n-2}$ using $n-1$ of the $n$ total players. We must have that at least three players must be within (or on the boundary of) one of the $n-2$ intervals, meaning that two of their IPMs must be in the same half interval, making the difference of their IPMs no more than $\frac{25n}{n-2}$.
\end{proof}

The above two results show that there has to be some bench player of non-negligible importance to the game, and that some players have to have IPMs that are ``somewhat'' close to each other. The first result fits with the widely accepted notion that bench play is a component to the success of a basketball team.

Finally, we may be interested in how the number of goals scored by players on different teams affects their IPMs. As an illustration, suppose we have only two players, $P_1$ and $P_2$ in the system, each on separate teams, with no interaction between them and $P_1$ scores $g_1$ goals and $P_2$ scores $g_2$ goals. Then if $g_1 > g_2$ then the IPM of $P_1$ is greater than that of $P_2$. This follows as it is clear that the IPM of $P_1$ is $\frac{50nr_{g}(g_1 +1)}{(g_1 + g_2 +2)(1-r_g)}$ and the IPM of $P_2$ is $\frac{50nr_{g}(g_2 +1)}{(g_1 + g_2 +2)(1-r_g)}$ where $r_g$ is the rank of the goal node.

\section{Sample Data Analysis}

We now apply our model to some real-life basketball games.
The specific games used were 
\begin{itemize}
\item Chicago Bulls vs. San Antonio Spurs (November 30th 2015), 
\item Golden State Warriors vs. Cleveland Cavaliers (January 18th 2016), 
\item San Antonio Spurs vs.  Brooklyn Nets (December 3rd 2014), 
\item Chicago Bulls vs. Charlotte Hornets (December 3rd 2014), 
\item Los Angeles Lakers vs.  Golden State Warriors (November 1st 2014), 
\item Toronto Raptors vs. Cleveland Cavaliers (December 9th 2014), 
\item Golden State Warriors vs. Cleveland Cavaliers (December 25th 2015), 
\item Chicago Bulls vs.  Oklahoma City Thunder (December 25th 2015), and 
\item Washington Wizards vs.  Cleveland Cavaliers (November 26th 2014). 
\end{itemize}
The games are identified by the teams playing, date, score and winning team. 
In each game, plays were manually transcribed, and the resulting transition matrices and PageRank vectors were calculated. 
The nine tables list, in decreasing order, the IPMs of each of the players (rounded to the nearest hundredth) in all nine NBA basketball games from which data was taken, followed by each player's points (P), assists (A), rebounds (R), steals (S), turnovers (T) and field goal percentage (FG\%), all accessed from nba.com.\\ 


\noindent Chicago Bulls vs. San Antonio Spurs, November 30th 2015 (Bulls win 92-89)
\begin{table}[ht]
\begin{center}
\begin{tabular}{| l | l | c | r | r | r | r | r | r |} \hline

Player name & Team & IPM & P & A & R & S & T & FG\% \\ \hline \hline
Parker & Spurs & 103.66 & 13 & 9 & 1 & 0 & 0 & 50\\ \hline
Rose & Bulls & 83.68 & 11 & 6 & 4 & 1 & 1 & 29\\ \hline
Duncan & Spurs & 74.05 & 6 & 3 & 12 & 0 & 2 & 43\\ \hline
Gasol & Bulls & 72.30 & 18 & 4 & 13 & 1 & 1 & 33\\ \hline
Leonard & Spurs & 63.62 & 25 & 3 & 8 & 2 & 2 & 77\\ \hline
Aldridge & Spurs & 59.99 & 21 & 0 & 12 & 0 & 2 & 55\\ \hline
Noah & Bulls & 58.32 & 8 & 7 & 11 & 0 & 0 & 67\\ \hline
Green & Spurs & 57.69 & 9 & 1 & 4 & 2 & 1 & 30\\ \hline
Butler & Bulls & 49.23 & 14 & 3 & 3 & 1 & 5 & 55\\ \hline
Mirotic & Bulls & 46.55 & 8 & 2 & 5 & 0 & 2 & 38\\ \hline
Ginobli & Spurs & 44.89 & 4 & 2 & 1 & 1 & 1 & 25\\ \hline
Diaw & Spurs & 44.82 & 5 & 0 & 6 & 0 & 1 & 40\\ \hline
Mills & Spurs & 40.31 & 4 & 1 & 0 & 0 & 0 & 25\\ \hline
Moore & Bulls & 33.42 & 6 & 1 & 2 & 0 & 1 & 50\\ \hline
West & Spurs & 31.02 & 2 & 1 & 3 & 0 & 0 & 20\\ \hline
Snell & Bulls & 27.48 & 11 & 1 & 6 & 0 & 0 & 80\\ \hline
McDermott & Bulls & 24.97 & 12 & 0 & 3 & 0 & 0 & 42\\ \hline
Gibson & Bulls & 20.86 & 4 & 1 & 4 & 1 & 0 & 40\\ \hline
Anderson & Spurs & 13.13 & 0 & 0 & 0 & 0 & 0 & 0\\ \hline

\end{tabular}
\end{center}
\label{default}
\end{table}%

\pagebreak

\noindent Golden State Warriors vs. Cleveland Cavaliers, January 18th 2016 (Warriors win 132-98)
\begin{table}[ht]
\begin{center}
\begin{tabular}{| l | l | c | r | r | r | r | r | r |} \hline

Player name & Team & IPM & P & A & R & S & T & FG\%\\ \hline \hline
Curry & Warriors & 122.91 & 35 & 4 & 5 & 3 & 1 & 67\\ \hline
Green & Warriors & 109.34 & 16 & 10 & 7 & 0 & 1 & 50\\ \hline
Dellavedova & Cavaliers & 109.26 & 11 & 6 & 1 & 0 & 1 & 50\\ \hline
James & Cavaliers & 91.43 & 16 & 5 & 5 & 1 & 3 & 44\\ \hline
Irving & Cavaliers & 78.75 & 8 & 3 & 5 & 0 & 2 & 27\\ \hline
Barnes & Warriors & 66.94 & 12 & 0 & 2 & 0 & 2 & 50\\ \hline
Livingston & Warriors & 62.80 & 4 & 0 & 2 & 0 & 2 & 67\\ \hline
Love & Cavaliers & 60.54 & 3 & 2 & 6 & 0 & 1 & 20\\ \hline
Iguodala & Warriors & 60.24 & 20 & 5 & 3 & 0 & 1 & 88\\ \hline
Bogut & Warriors & 58.61 & 4 & 0 & 6 & 0 & 0 & 67\\ \hline
Varejao & Cavaliers & 50.63 & 5 & 3 & 4 & 1 & 1 & 50\\ \hline
Shumpert & Cavaliers & 39.84 & 10 & 0 & 2 & 0 & 3 & 67\\ \hline
Barbosa & Warriors & 39.52 & 8 & 4 & 1 & 2 & 0 & 50\\ \hline
Thompson & Warriors & 39.05 & 15 & 2 & 1 & 1 & 1 & 45\\ \hline
Clark & Warriors & 38.24 & 6 & 2 & 2 & 0 & 0 & 29\\ \hline
Ezeli & Warriors & 29.79 & 4 & 0 & 2 & 0 & 2 & 67\\ \hline
Mozgov & Cavaliers & 28.84 & 6 & 3 & 0 & 0 & 1 & 50\\ \hline
Smith & Cavaliers & 26.99 & 14 & 1 & 2 & 1 & 1 & 67\\ \hline
Thompson & Cavaliers & 23.73 & 2 & 0 & 2 & 0 & 0 & 0\\ \hline
J Thompson & Warriors & 21.89 & 1 & 1 & 3 & 0 & 0 & 1\\ \hline
Cunningham & Cavaliers & 20.39 & 9 & 1 & 3 & 0 & 0 & 60\\ \hline
Jefferson & Cavaliers & 18.47 & 6 & 0 & 3 & 0 & 1 & 100\\ \hline
Jones & Cavaliers & 17.43 & 8 & 1 & 0 & 0 & 1 & 60\\ \hline
Speights & Warriors & 17.22 & 4 & 0 & 1 & 0 & 0 & 25\\ \hline
Rush & Warriors & 17.17 & 3 & 1 & 3 & 0 & 0 & 33\\ \hline

\end{tabular}
\end{center}
\label{default}
\end{table}%

\pagebreak

\noindent San Antonio Spurs vs. Brooklyn Nets, December 3rd 2014 (Nets win 95-93)
\begin{table}[ht]
\begin{center}
\begin{tabular}{| l | l | c | r | r | r | r | r | r |} \hline

Player name & Team & IPM & P & A & R & S & T & FG\%\\ \hline \hline
Williams & Nets & 111.40 & 17 & 9 & 3 & 0 & 2 & 40\\ \hline
Teletovic & Nets & 90.99 & 26 & 2 & 15 & 0 & 0 & 69\\ \hline
Duncan & Spurs & 81.11 & 14 & 1 & 17 & 0 & 2 & 28\\ \hline
Parker & Spurs & 79.92 & 9 & 6 & 1 & 0 & 3 & 50\\ \hline
Ginobli & Spurs & 68.23 & 15 & 5 & 6 & 1 & 0 & 46\\ \hline
Green & Spurs & 65.77 & 20 & 2 & 10 & 2 & 0 & 50\\ \hline
Lopez & Nets & 62.01 & 16 & 3 & 16 & 0 & 0 & 35\\ \hline
Leonard & Spurs & 56.59 & 12 & 1 & 13 & 1 & 0 & 25\\ \hline
Joseph & Spurs & 55.32 & 7 & 3 & 3 & 0 & 0 & 38\\ \hline
Johnson & Nets & 53.48 & 8 & 2 & 5 & 1 & 1 & 25\\ \hline
Jack & Nets & 47.11 & 8 & 3 & 1 & 0 & 2 & 40\\ \hline
Diaw & Spurs & 45.72 & 0 & 3 & 2 & 0 & 2 & 0\\ \hline
Bonner & Spurs & 35.93 & 7 & 0 & 1 & 0 & 0 & 30\\ \hline
Bogdanovic & Nets & 32.72 & 14 & 0 & 8 & 0 & 2 & 50\\ \hline
Baynes & Spurs & 17.78 & 4 & 1 & 4 & 1 & 0 & 40\\ \hline
Anderson & Nets & 14.49 & 2 & 1 & 1 & 0 & 2 & 20\\ \hline
Belinelli & Spurs & 11.89 & 5 & 1 & 1 & 0 & 1 & 67\\ \hline
Jordan & Nets & 9.83 & 2 & 0 & 2 & 1 & 0 & 100\\ \hline
Plumlee & Nets & 9.73 & 2 & 0 & 3 & 0 & 1 & 33\\ \hline

\end{tabular}
\end{center}
\label{default}
\end{table}%

\pagebreak

\noindent Chicago Bulls vs. Charlotte Hornets, December 3rd 2014 (Bulls win 102-95)
\begin{table}[ht]
\begin{center}
\begin{tabular}{| l | l | c | r | r | r | r | r | r |} \hline

Player name & Team & IPM & P & A & R & S & T & FG\% \\ \hline \hline
Walker & Hornets & 110.08 & 23 & 4 & 5 & 1 & 0 & 39\\ \hline
Rose & Bulls & 85.77 & 15 & 5 & 2 & 0 & 2 & 42\\ \hline
Gasol & Bulls & 84.94 & 19 & 3 & 15 & 0 & 2 & 37\\ \hline
Noah & Bulls & 77.80 & 14 & 7 & 10 & 1 & 2 & 67\\ \hline
Mirotic & Bulls & 73.59 & 11 & 1 & 2 & 0 & 1 & 50\\ \hline
Stephenson & Hornets & 64.26 & 20 & 4 & 8 & 1 & 4 & 50\\ \hline
Zeller & Hornets & 63.21 & 12 & 2 & 8 & 0 & 0 & 45\\ \hline
Williams & Hornets & 59.65 & 6 & 0 & 3 & 1 & 0 & 40\\ \hline
Butler & Bulls & 52.88 & 15 & 5 & 2 & 2 & 1 & 45\\ \hline
Brooks & Bulls & 51.90 & 7 & 3 & 3 & 0 & 2 & 43\\ \hline
Hinrich & Bulls & 48.90 & 12 & 2 & 3 & 0 & 1 & 44\\ \hline
Roberts & Hornets & 37.93 & 3 & 3 & 1 & 0 & 0 & 13\\ \hline
Jefferson & Hornets & 35.36 & 13 & 2 & 7 & 0 & 0 & 38\\ \hline
Dunleavy & Bulls & 32.20 & 9 & 0 & 1 & 1 & 0 & 60\\ \hline
Henderson & Hornets & 30.55 & 10 & 1 & 4 & 0 & 1 & 50\\ \hline
Hairston & Hornets & 14.67 & 4 & 1 & 2 & 2 & 0 & 14\\ \hline
Snell & Bulls & 13.54 & 0 & 1 & 1 & 0 & 0 & 0\\ \hline
Biyombo & Hornets & 12.64 & 4 & 0 & 4 & 0 & 0 & 50\\ \hline
Pargo & Hornets & 0.14 & 0 & 0 & 0 & 0 & 0 & 0\\ \hline

\end{tabular}
\end{center}
\label{default}
\end{table}%

\pagebreak

\noindent Los Angeles Lakers vs. Golden State Warriors, November 1st 2014 (Warriors win 127-104)
\begin{table}[ht]
\begin{center}
\begin{tabular}{| l | l | c | r | r | r | r | r | r |} \hline

Player name & Team & IPM & P & A & R & S & T & FG\%\\ \hline \hline
Curry & Warriors & 132.74 & 31 & 10 & 5 & 3 & 2 & 53\\ \hline
Lin & Lakers & 94.24 & 6 & 6 & 4 & 1 & 5 & 0\\ \hline
Green & Warriors & 89.61 & 9 & 1 & 5 & 1 & 1 & 33\\ \hline
Iguodala & Warriors & 84.83 & 9 & 6 & 4 & 2 & 4 & 50\\ \hline
Bogut & Warriors & 79.29 & 6 & 3 & 10 & 1 & 5 & 30\\ \hline
Bryant & Lakes & 76.87 & 28 & 1 & 6 & 2 & 7 & 43\\ \hline
Hill & Lakers & 74.27 & 23 & 4 & 5 & 0 & 2 & 71\\ \hline
Price & Lakers & 64.48 & 1 & 6 & 4 & 2 & 2 & 0\\ \hline
Davis & Lakers & 59.39 & 13 & 2 & 6 & 1 & 1 & 71\\ \hline
Barnes & Warriors & 53.12 & 15 & 3 & 4 & 1 & 1 & 83\\ \hline
Thompson & Warriors & 50.27 & 41 & 2 & 5 & 0 & 1 & 78\\ \hline
Livingston & Warriors & 42.72 & 2 & 1 & 2 & 1 & 1 & 50\\ \hline
Boozer & Lakers & 38.93 & 9 & 1 & 4 & 0 & 0 & 44\\ \hline
Ezeli & Warriors & 37.05 & 3 & 1 & 4 & 0 & 2 & 100\\ \hline
Barbosa & Warriors & 34.37 & 9 & 3 & 1 & 1 & 3 & 50\\ \hline
Johnson & Lakers & 33.40 & 15 & 0 & 4 & 0 & 1 & 67\\ \hline
Ellington & Lakers & 20.53 & 2 & 1 & 4 & 1 & 1 & 50\\ \hline
Speights & Warriors & 14.58 & 2 & 0 & 3 & 0 & 0 & 50\\ \hline
Sacre & Lakers & 7.69 & 4 & 0 & 1 & 0 & 2 & 50\\ \hline
Clarkson & Lakers & 5.52 & 3 & 0 & 1 & 2 & 1 & 20\\ \hline
Henry & Lakers & 4.95 & 0 & 0 & 0 & 0 & 0 & 0\\ \hline
Holiday & Warriors & 1.18 & 0 & 0 & 0 & 0 & 0 & 0\\ \hline

\end{tabular}
\end{center}
\label{default}
\end{table}%

\pagebreak

\noindent Toronto Raptors vs. Cleveland Cavaliers, December 9th 2014 (Cavaliers win 105-101)
\begin{table}[ht]
\begin{center}
\begin{tabular}{| l | l | c | r | r | r | r | r | r |} \hline

Player name & Team & IPM & P & A & R & S & T & FG\%\\ \hline \hline
Lowry & Raptors & 123.55 & 16 & 14 & 4 & 1 & 0 & 33\\ \hline
Irving & Cavaliers & 118.30 & 13 & 10 & 1 & 2 & 2 & 42\\ \hline
James & Cavaliers & 95.39 & 35 & 4 & 2 & 2 & 2 & 57\\ \hline
Love & Cavaliers & 70.70 & 17 & 4 & 9 & 0 & 3 & 40\\ \hline
Valanciunas & Raptors & 65.34 & 18 & 0 & 15 & 0 & 3 & 86\\ \hline
Dellavedova & Cavaliers & 59.32 & 6 & 5 & 3 & 0 & 0 & 50\\ \hline
Patterson & Raptors & 47.36 & 12 & 1 & 4 & 0 & 1 & 71\\ \hline
Thompson & Cavaliers & 44.27 & 8 & 0 & 8 & 0 & 1 & 60\\ \hline
Williams & Raptors & 43.97 & 6 & 4 & 1 & 0 & 1 & 25\\ \hline
A. Johnson & Raptors & 41.76 & 10 & 2 & 2 & 0 & 2 & 50\\ \hline
Ross & Raptors & 37.68 & 18 & 1 & 3 & 0 & 5 & 62\\ \hline
Vasquez & Raptors & 33.11 & 3 & 2 & 0 & 0 & 1 & 33\\ \hline
Fields & Raptors & 32.38 & 4 & 2 & 1 & 2 & 1 & 100\\ \hline
Varejao & Cavaliers & 32.15 & 8 & 1 & 6 & 0 & 1 & 40\\ \hline
Waiters & Cavaliers & 29.92 & 18 & 2 & 1 & 0 & 1 & 70\\ \hline
J. Johnson & Raptors & 24.98 & 12 & 0 & 4 & 1 & 1 & 46\\ \hline
Marion & Cavaliers & 23.12 & 0 & 0 & 1 & 0 & 1 & 0\\ \hline
Jones & Cavaliers & 20.41 & 0 & 1 & 0 & 0 & 0 & 0\\ \hline
Hayes & Raptors & 6.30 & 2 & 0 & 1 & 0 & 0 & 100\\ \hline

\end{tabular}
\end{center}
\label{default}
\end{table}%

\pagebreak

\noindent Golden State Warriors vs. Cleveland Cavaliers, December 25th 2015 (Warriors win 89-83)
\begin{table}[ht]
\begin{center}
\begin{tabular}{| l | l | c | r | r | r | r | r | r |} \hline

Player name & Team & IPM & P & A & R & S & T & FG\%\\ \hline \hline
Green & Warriors & 140.89 & 22 & 7 & 15 & 0 & 4 & 47\\ \hline
Curry & Warriors & 117.29 & 19 & 7 & 7 & 2 & 3 & 40\\ \hline
Love & Cavaliers & 106.85 & 10 & 4 & 18 & 0 & 1 & 31\\ \hline
Dellavedova & Cavaliers & 98.70 & 10 & 1 & 5 & 1 & 1 & 36\\ \hline
James & Cavaliers & 97.45 & 25 & 2 & 9 & 1 & 4 & 38\\ \hline
Iguodala & Warriors & 74.47 & 7 & 3 & 2 & 1 & 0 & 17\\ \hline
Irving & Cavaliers & 65.18 & 13 & 2 & 3 & 1 & 2 & 27\\ \hline
Thompson & Cavaliers & 57.09 & 8 & 1 & 10 & 1 & 0 & 50\\ \hline
Livingston & Warriors & 54.35 & 16 & 2 & 3 & 1 & 4 & 89\\ \hline
Thompson & Warriors & 47.08 & 18 & 1 & 6 & 0 & 1 & 38\\ \hline
Bogut & Warriors & 47.03 & 4 & 1 & 7 & 0 & 0 & 100\\ \hline
Ezeli & Warriors & 36.77 & 3 & 0 & 4 & 0 & 2 & 25\\ \hline
Shumpert & Cavaliers & 30.38 & 0 & 1 & 4 & 1 & 0 & 0\\ \hline
Smith & Cavaliers & 28.77 & 14 & 0 & 1 & 1 & 2 & 44\\ \hline
Rush & Warriors & 18.84 & 0 & 0 & 3 & 1 & 1 & 0\\ \hline
Mozgov & Cavaliers & 15.40 & 0 & 0 & 3 & 0 & 1 & 0\\ \hline
Clark & Warriors & 14.20 & 0 & 0 & 0 & 1 & 0 & 0\\ \hline
Barbosa & Warriors & 14.04 & 0 & 0 & 1 & 0 & 0 & 0 \\ \hline
McAdoo & Warriors & 13.33 & 0 & 0 & 1 & 0 & 0 & 0\\ \hline
Speights & Warriors & 10.75 & 0 & 0 & 0 & 1 & 1 & 0\\ \hline
Jones & Cavaliers & 5.78 & 0 & 0 & 2 & 0 & 0 & 0\\ \hline
Williams & Cavaliers & 5.35 & 3 & 1 & 0 & 0 & 0 & 0\\ \hline

\end{tabular}
\end{center}
\label{default}
\end{table}%

\pagebreak

\noindent Chicago Bulls vs. Oklahoma City Thunder, December 25th 2015 (Bulls win 105-96)
*last 36.4 seconds of second quarter and first 17 seconds of 3rd quarter were not able to be seen from the source.\\
\begin{table}[ht]
\begin{center}
\begin{tabular}{| l | l | c | r | r | r | r | r | r |} \hline

Player name & Team & IPM & P & A & R & S & T & FG\%\\ \hline \hline
Westbrook & Thunder & 125.96 & 26 & 8 & 7 & 6 & 6 & 39\\ \hline
Gasol & Bulls & 104.06 & 21 & 6 & 13 & 0 & 4 & 50\\ \hline
Butler & Bulls & 94.85 & 23 & 4 & 6 & 4 & 3 & 45\\ \hline
Durant & Thunder & 79.36 & 29 & 7 & 9 & 1 & 2 & 52\\ \hline
Rose & Bulls & 79.34 & 19 & 1 & 4 & 0 & 4 & 39\\ \hline
Kanter & Thunder & 70.19 & 14 & 1 & 13 & 0 & 0 & 50\\ \hline
Gibson & Bulls & 62.38 & 13 & 2 & 10 & 1 & 1 & 75\\ \hline
Ibaka & Thunder & 58.25 & 6 & 0 & 7 & 2 & 2 & 25\\ \hline
Portis & Bulls & 49.98 & 7 & 3 & 5 & 1 & 1 & 38\\ \hline
Hinrich & Bulls & 41.15 & 2 & 2 & 0 & 0 & 0 & 50\\ \hline
Adams & Thunder & 38.62 & 3 & 0 & 4 & 0 & 0 & 25\\ \hline
Brooks & Bulls & 35.67 & 6 & 1 & 4 & 0 & 0 & 50\\ \hline
Mirotic & Bulls & 32.04 & 6 & 2 & 7 & 1 & 1 & 20\\ \hline
Augustin & Thunder & 28.25 & 3 & 1 & 1 & 1 & 2 & 25\\ \hline
Roberson & Thunder & 23.22 & 2 & 1 & 4 & 0 & 0 & 17\\ \hline
McDermott & Bulls & 19.31 & 5 & 1 & 3 & 1 & 1 & 29\\ \hline
Morrow & Thunder & 18.27 & 9 & 0 & 1 & 1 & 0 & 50\\ \hline
Snell & Bulls & 17.27 & 3 & 0 & 1 & 0 & 1 & 25\\ \hline
Waiters & Thunder & 12.41 & 2 & 2 & 0 & 1 & 1 & 17\\ \hline
Collison & Thunder & 9.42 & 2 & 0 & 2 & 0 & 0 & 50\\ \hline

\end{tabular}
\end{center}
\label{default}
\end{table}%

\pagebreak

\noindent Washington Wizards vs. Cleveland Cavaliers, November 26th 2014 (Cavaliers win 113-87)
\begin{table}[ht]
\begin{center}
\begin{tabular}{| l | l | c | r | r | r | r | r | r |} \hline

Player name & Team & IPM & P & A & R & S & T & FG\%\\ \hline \hline
James & Cavaliers & 117.45 & 29 & 8 & 10 & 3 & 4 & 50\\ \hline
Irving & Cavaliers & 117.00 & 18 & 5 & 1 & 3 & 1 & 47\\ \hline
Wall & Wizards & 113.96 & 6 & 7 & 4 & 0 & 5 & 33\\ \hline
Beal & Wizards & 62.07 & 10 & 2 & 2 & 3 & 1 & 40\\ \hline
Love & Cavaliers & 61.40 & 21 & 0 & 5 & 0 & 2 & 70\\ \hline
Gortat & Wizards & 52.63 & 12 & 1 & 2 & 1 & 3 & 50\\ \hline
Thompson & Cavliers & 50.89 & 10 & 0 & 1 & 0 & 0 & 100\\ \hline
Waiters & Cavaliers & 49.51 & 15 & 6 & 3 & 2 & 1 & 35\\ \hline
Miller & Wizards & 48.21 & 7 & 6 & 2 & 0 & 0 & 75\\ \hline
Seraphin & Wizards & 47.26 & 7 & 3 & 3 & 0 & 2 & 38\\ \hline
Varejao & Cavaliers & 47.03 & 10 & 0 & 7 & 0 & 1 & 100\\ \hline
Humphries & Wizards & 45.24 & 3 & 1 & 3 & 0 & 1 & 14\\ \hline
Pierce & Wizards & 42.62 & 15 & 3 & 3 & 0 & 3 & 80\\ \hline
Marion & Cavaliers & 40.00 & 6 & 2 & 4 & 2 & 0 & 25\\ \hline
Porter Jr. & Wizards & 30.72 & 2 & 1 & 2 & 0 & 0 & 25\\ \hline
Cherry & Cavaliers & 28.63 & 2 & 0 & 0 & 2 & 0 & 0\\ \hline
Blair & Wizards & 23.79 & 0 & 0 & 1 & 0 & 1 & 0\\ \hline
Butler & Wizards & 23.11 & 23 & 0 & 1 & 1 & 2 & 60\\ \hline
Amundson & Cavaliers & 20.52 & 0 & 1 & 1 & 0 & 0 & 0\\ \hline
Gooden & Wizards & 16.66 & 2 & 0 & 3 & 0 & 0 & 50\\ \hline
Harris & Cavaliers & 11.30 & 2 & 0 & 2 & 0 & 0 & 50\\ \hline

\end{tabular}
\end{center}
\label{default}
\end{table}%

\pagebreak

There are several general trends that are reflected in the sample data that intuitively matches what we would expect the trend to be. For example, out of all the players with IPMs under 30, $\frac{48}{55}\approx87\%$ of them had their sum of points, assists, rebounds and steals no greater than 10, corresponding to a ``small'' stat-line. On the other hand out of all players who had IPMs greater than 70, $\frac{28}{44}\approx64\%$ had their sum of points, assists, rebounds and steals be at least 25, corresponding to a ``large'' stat-line. In terms of highest ranked positions, 7 of the games had a point guard ranked the highest; however, a forward was ranked in the top five IPMs in all nine games as well. On average, the starters of both teams owned approximately 7.1 out of the first 10 highest IPMs, which fits with our knowledge that at least one bench player must have some significant importance to the game. Out of the nine games sampled, the highest ranked player was on the losing team four times. 

Now let us consider some basic trends in certain averages of the IPMs in the nine games. In the table below all values were rounded to the nearest hundredth, WT stands for winning team, LT stands for losing team and AIPM stands for average IPM.

\begin{table}[ht]
\begin{center}
\begin{tabular}{| c | c | c | c | c |} \hline

Game & WT AIPM & LT AIPM & WT starter AIPM & LT starter AIPM\\ \hline \hline
1 & 46.31 & 53.32 & 55.85 & 71.80\\ \hline
2 & 52.59 & 47.19 & 79.37 & 57.31\\ \hline
3 & 47.97 & 51.82 & 70.11 & 65.82\\ \hline
4 & 57.95 & 42.85 & 66.72 & 60.69\\ \hline
5 & 56.34 & 43.66 & 81.00 & 63.54\\ \hline
6 & 54.84 & 45.64 & 67.93 & 60.14\\ \hline
7 & 49.09 & 51.10 & 74.23 & 62.73\\ \hline
8 & 53.60 & 46.40 & 74.53 & 65.08\\ \hline
9 & 54.37 & 46.03 & 76.58 & 63.31\\ \hline

\end{tabular}
\end{center}
\label{default}
\end{table}%

Out of the nine games sampled, the winning team had a higher average IPM than that of the losing team six times. However, the average IPMs of starters on the winning team was greater than that for the losing team in eight out of the nine games.

In many cases our intuition of how a strong offensive/overall performance by a player should be ranked agreed with the generated IPMs. For example, consider the cases of LeBron James (Cleveland Cavaliers), Stephen Curry (Golden State Warriors) and Pau Gasol (Chicago Bulls). Each are recognized as being very talented players in the NBA and in each game surveyed in which they played they each had a ``strong'' stat-line, and also a very high IPM. However, there were certainly also some surprises where players with ``strong'' stat-lines did not have large IPMs. In the Lakers vs. Warriors game, Klay Thompson scored 41 points yet had a very average IPM of 50.27. In the Wizards vs. Cavaliers game, Butler had 23 points and a IPM of about 23. 

A general trend that we observe is that out of all the players who had at least 15 points and had a IPM less than 50, $89\%$ had the sum of their assists, rebounds and steals be under 10, i.e. scoring alone was not generally highly valued by the model. In contrast, out of the 30 players in all games who had at least 5 assists, 28 of them had a IPM of at least 50 (about 93\%) showing a general trend of rewarding passing compared to scoring. There were of course also some overachievers in the sample data; of the top 10 IPMs in each game, on average $\frac{36/9}{10} = 40\%$ had a sum of their points, assists, rebounds and steals be at most 20, corresponding to what one could call an at most ``standard/average'' stat-line. Specific examples of high ranking  players with low stat-lines in this model were John Wall (Washington Wizards) vs. the Cavaliers, Ronnie Price (Los Angeles Lakers) vs. the Warriors, Jeremy Lin (Los Angeles Lakers) vs. the Warriors, Tony Parker (Spurs) vs. the Nets and Matthew Dellavedova (Cavaliers) vs. the Warriors (both games) and vs. the Raptors.

\section{Conclusion}
While the IPMs of players are comparable between games, in future work we may consider, over a whole season in one sports league, having all players and teams represented by one large weighted directed graph with one goal node. This new statistic could be used as another measure of performance between any athletes in the same league. Note that adjusting that larger model for dealing with such events as player trades would not be difficult. In any case, automating processing input is a necessary step in scaling the model. Moreover, adjusting the model to encompass more sports that follow similar models to soccer, basketball and hockey, for instance volleyball or water polo, could find useful applications as well. 

The PageRank-based model which we have constructed appears to be the first of its kind to give a quantifiable measure in which players on different teams and even different games can be ranked and compared inclusive of their offensive and defensive skills. While this model could serve as a useful mainstream statistic for scouts, coaches, managers and fans, more data analysis is required to see if the model can provide accurate outcome predictions for games (comparing the average IPM of the starters of each team, prior to that game, may be a useful tool for game outcome prediction). The statistics from our proposed model could be used in conjunction with the standard player performance metrics in each sport to help deepen our understanding of who is really affecting the game the most.

\appendix
\section{Appendix A}

Soccer rules of implementation:
\begin{itemize}
\item Pass from player $i$ to player $j$.
	\begin{itemize}
		\item An arc from node $j$ to node $i$
	\end{itemize}
\item Player $i$ dispossesses player $j$. [This could include cases where player $i$ does not gain possession of the ball after dispossessing player $j$, for example a defensive tackle leading to the ball being out of play or deflecting a pass still into play but away from its intended target.]
	\begin{itemize}
		\item An arc from node $j$ to node $i$
	\end{itemize}
\item Player $i$ scores.
	\begin{itemize}
		\item An arc from the goal node to node $i$
	\end{itemize}
\item Player $i$ shoots when being pressed by player $j$ and misses the net. [Same as player $j$ dispossessing player $i$. This case includes the situation where player $j$ blocks player $i$.]
	\begin{itemize}
		\item An arc from node $i$ to node $j$
	\end{itemize}
\item Player $i$ shoots and misses the net under no pressure. [Play is dead.]
	\begin{itemize}
		\item No arcs drawn
	\end{itemize}
\item Player $i$ shoots and shot is saved by the goalkeeper, player $j$. [Same as player $j$ dispossessing player $i$.]
	\begin{itemize}
		\item An arc from player $i$ to player $j$
	\end{itemize}
\item Player $i$ fouls player $j$, not resulting in a goal. [Play is dead.]
	\begin{itemize}
		\item No arcs drawn
	\end{itemize}
\item Player $i$ fouls player $j$ resulting in a penalty or a goal from a free kick. [Smart drawing of a foul by player $j$, scoring from a free kick could include a direct shot, a header or volley from the free kick or a rebound inside the box following the free kick.]
	\begin{itemize}
		\item Arc from node $i$ to node $j$
	\end{itemize}
\item Any stoppage of play that does not have to do with the game (i.e. weather, fan interference, injury, altercation etc.).  [Play is dead.]
	\begin{itemize}
		\item No arc drawn
	\end{itemize}
\item Player $i$ intercepts a pass from player $j$. [Same as player $i$ dispossessing player $j$.]
	\begin{itemize}
		\item An arc from node $j$ to node $i$
	\end{itemize}
\item Player $i$ touches the ball without having possession (for example the ball hits player $i$, or a ``pinball'' play). [Player $i$ did not have possession.]
	\begin{itemize}
		\item No arc drawn
	\end{itemize}
\item Any unforced turnover by player $i$.
	\begin{itemize}
		\item No arcs drawn
	\end{itemize}
\item Player $i$ is offside when player $j$ passes the ball. [Player $j$ passes the ball to player $i$ who is in an offside position so the play ends at player $i$.]
	\begin{itemize}
		\item An arc from node $j$ to node $i$
	\end{itemize}
\end{itemize}

\section{Appendix B}
Hockey rules of implementation:
\begin{itemize}
\item Pass from player $i$ to player $j$.
	\begin{itemize}
		\item An arc from node $j$ to node $i$
	\end{itemize}
\item Player $i$ dispossesses player $j$. [This could include cases where player $i$ does not gain possession of the puck after dispossessing player $j$, for example a defensive touch leading to the puck being out of play or deflecting a pass still into play but away from its intended target.]
	\begin{itemize}
		\item An arc from node $j$ to node $i$
	\end{itemize}
\item Player $i$ scores.
	\begin{itemize}
		\item An arc from the goal node to node $i$
	\end{itemize}
\item Player $i$ shoots when being defended by player $j$ and misses the net. [Same as player $j$ dispossessing player $i$, play resumes with the player that collects the puck after the shot. This case includes the situation where player $j$ blocks the shot of player $i$.]
	\begin{itemize}
		\item An arc from node $i$ to node $j$
	\end{itemize}
\item Player $i$ shoots and the shot is saved by the goalkeeper, player $j$. [Same as player $j$ dispossessing player $i$.]
	\begin{itemize}
		\item An arc from node $i$ to player $j$
	\end{itemize}
\item Player $i$ shoots and misses the net under no pressure. [Play is dead.]
	\begin{itemize}
		\item No arcs drawn
	\end{itemize}
\item Player $i$ draws a penalty from player $j$ during which no power-play goal is scored. [A ``smart'' penalty.]
	\begin{itemize}
		\item An arc from node $i$ to node $j$
	\end{itemize}
\item Player $i$ draws a penalty from player $j$ during which a power-play goal is scored. [A  ``smart'' drawing of a penalty.]
	\begin{itemize}
		\item An arc from node $j$ to node $i$
	\end{itemize}
\item Any stoppage of play that does not have to do with the game (i.e. a penalty, fan interference, injury, altercation etc.). [Play is dead.]
	\begin{itemize}
		\item No arc drawn
	\end{itemize}
\item Player $i$ intercepts a pass from player $j$. [Same as player $i$ dispossessing player $j$.]
	\begin{itemize}
		\item An arc from node $j$ to node $i$
	\end{itemize}
\item Player $i$ touches the puck without having possession (for example the puck hits player $i$, or a ``pinball'' play). [Player $i$ did not have possession.]
	\begin{itemize}
		\item No arc drawn
	\end{itemize}
\item Any unforced turnover by player $i$.
	\begin{itemize}
		\item No arcs drawn
	\end{itemize}
\item Player $i$ is offside when player $j$ passes the puck. [Player $j$ still passed player $i$ the puck, the play ended by player $i$ being in an offside position.]
	\begin{itemize}
		\item An arc from node $i$ to node $j$
	\end{itemize}
\item Player $i$ ices the puck which is touched by player $j$. [Same as player $i$ turning the puck over to player $j$.]
	\begin{itemize}
		\item An arc from node $i$ to node $j$
	\end{itemize}
\end{itemize}

\end{document}